\newtheorem{theorem}{Theorem}
\newtheorem{lemma}{Lemma}
\newcommand{\qed}{\hfill $\Box$ \medbreak}
\newenvironment{proof}{\noindent {\bf Proof.}}{\qed}
\newcommand{\id}{\mbox{\rm id}}
\newcommand{\E}{\mathbf{E}}
\renewcommand{\P}{{\cal P}}
\newcommand{\Congest}{{\rm \textsf{CONGEST}}}
\newcommand{\Local}{{\rm \textsf{LOCAL}}}
\newcommand{\seq}{{\rm \textsf{Seq}}}
\newcommand{\dist}{{\rm \textsf{Dist}}}
\newcommand{\ppr}{p^{1-\frac{\log \log \log p + 4}{\log \log p}}}
\begin{document}

\title{Distributed Testing of Excluded Subgraphs}

\author[1]{Pierre Fraigniaud\thanks{Additional support from ANR project DISPLEXITY, and Inria project GANG. }}
\author[2]{Ivan Rapaport\thanks{Additional support from CONICYT via Basal in Applied Mathematics, N\'ucleo Milenio Informaci\'on y Coordinaci\'on en Redes ICM/FIC RC130003, Fondecyt 1130061.}}
\author[2]{Ville Salo\thanks{Additional support from FONDECYT project 3150552.}}
\author[3]{Ioan Todinca}

\affil[1]{CNRS and University Paris Diderot, France}
\affil[2]{DIM-CMM (UMI 2807 CNRS), Universidad de Chile, Chile}
\affil[3]{Universit\'e d'Orl\'eans, France}

\date{}

\maketitle

\begin{abstract}
We study \emph{property testing} in the context of \emph{distributed computing}, under the classical \Congest\ model. It is known that testing whether a graph is triangle-free can be done in a constant number of rounds, where the constant depends on how far  the input graph is from being triangle-free. We show that, for every connected 4-node  graph $H$, testing whether a graph is $H$-free can  be done in a constant number of rounds too. The constant also depends on how far the input graph is  from being $H$-free, and the dependence is identical to the one in the case of testing triangles. Hence, in particular, testing whether a graph is $K_4$-free, and  testing whether a graph is $C_4$-free can be done in a constant number of rounds (where $K_k$ denotes the $k$-node clique, and $C_k$ denotes the $k$-node cycle). On the other hand, we show that testing $K_k$-freeness and $C_k$-freeness for $k\geq 5$ appear to be much harder. Specifically, we investigate two natural types of generic  algorithms for testing $H$-freeness, called DFS tester and BFS tester. The latter captures the previously known algorithm to test the presence of triangles, while the former captures our generic algorithm to test the presence of a 4-node graph pattern $H$. We prove that both DFS and BFS testers fail to test $K_k$-freeness and $C_k$-freeness in a constant number of rounds for $k\geq 5$. 

\medskip

\noindent\textbf{Keywords:} property testing, decision and verification algorithms, \Congest\ model.
\end{abstract}

\vfill

\thispagestyle{empty}
\pagebreak
\setcounter{page}{1}
 
\section{Introduction}

Let $\P$ be a graph property, and let $0<\epsilon <1$ be a fixed parameter. According to the usual definition from  \emph{property testing}~\cite{Goldreich2010}, an $n$-node $m$-edge graph $G$ is $\epsilon$-far from satisfying $\P$ if applying a sequence of at most $\epsilon m$ edge-deletions or  edge-additions to $G$ cannot result in a graph satisfying $\P$. In the context of property testing, graphs are usually assumed to be stored using an adjacency list, and a centralized algorithm has the ability to probe nodes, via queries of the form $(i,j)$ where $i\in\{1,\dots,n\}$, and $j\geq 0$. The answer to a query $(i,0)$ is the degree of node~$i$, while the answer to a query $(i,j)$ with $j>0$ is the identity of the $j$th neighbor of node~$i$. After a small number of queries, the algorithm must output either accept or reject.  An algorithm $\seq$ is a testing algorithm for $\P$ if and only if, for every input graph $G$, 
\[
\left \{ \begin{array}{l}
\mbox{$G$ satisfies $\P \implies \Pr[\seq\; \mbox{accepts $G$}]\geq \frac23$;}  \\
\mbox{$G$ is $\epsilon$-far from satisfying $\P \implies \Pr[\seq\; \mbox{rejects $G$}]\geq \frac23$.}
\end{array} \right.
\]
(An algorithm is 1-sided if it systematically accepts every graph satisfying $\P$). The challenge in this context is to design testing algorithms performing as few queries as possible. 

In the context of \emph{distributed} property testing~\cite{CensorHillelFS16}, the challenge is not the number of queries (as all nodes perform their own queries in parallel), but the lack of global perspective on the input graph. The graph models a network. Every node of the network is a processor, and every processor can exchange messages with all processors corresponding to its neighboring nodes in the graph. After a certain number of rounds of computation, every node must output accept or reject.  A distributed algorithm $\dist$ is a distributed testing algorithm for $\P$ if and only if, for any graph $G$ modeling the actual network, 
\[
\left \{ \begin{array}{l}
\mbox{$G$ satisfies $\P \implies \Pr[\dist\; \mbox{accepts  $G$ in all nodes}]\geq \frac23$;}  \\
\mbox{$G$ is $\epsilon$-far from satisfying $\P \implies \Pr[\dist\; \mbox{rejects $G$ in at least one node}]\geq \frac23$.}
\end{array} \right.
\]

The challenge is to use as few resources of the network as possible. In particular, it is desirable that every processor could take its decision (accept or reject) without requiring data from far away processors in the network, and that processors exchange messages of size respecting the inherent bounds imposed by the limited bandwidth of the links. These two constraints are well captured by the  \Congest\ model. This model is a classical model for distributed computation~\cite{Peleg}. Processors are given distinct identities, that are supposed to have values in $[1,n^c]$ in $n$-node networks, for some constant $c\geq 1$. All processors start at the same time, and then proceed in synchronous rounds. At each round, every processor can send and receive messages to/from its neighbors, and perform some individual computation. All messages must be of size at most $O(\log n)$ bits. So, in particular, every message can include at most a constant number of processor identities. As a consequence, while every node can gather the identities of all its neighbors in just one round, a node with large degree that is aware of all the identities of its neighbors may not be able to send them all simultaneously to a given  neighbor. The latter observation enforces strong constraints on distributed testing algorithms in the  \Congest\ model. For instance, while the \Local\ model allows every node to gather its $t$-neighborhood in $t$ rounds, even just gathering the 2-neighborhood may require $\Omega(n)$ rounds in the  \Congest\ model (e.g., in the Lollipop graph). As a consequence, detecting the presence of even a small given pattern in a graph efficiently is not necessarily an easy task. 

The presence or absence of a certain given pattern (typically, paths, cycles or cliques of a given size) as a subgraph, induced subgraph, or minor, has a significant impact on graph properties, and/or on the ability to design efficient algorithms for hard problems. This paper investigates the existence of efficient distributed testing algorithms for the $H$-freeness property, depending on the given graph~$H$. Recall that, given a graph $H$, a graph $G$ is \emph{$H$-free} if and only if $H$ is not isomorphic to a subgraph of $G$, where $H$ is a subgraph of a graph $G$ if $V(H)\subseteq V(G)$, and $E(H)\subseteq E(G)$.  Recently, Censor-Hillel et al.~\cite{CensorHillelFS16} established a series of results regarding distributed testing of different graph properties, including bipartiteness and triangle-freeness. A triangle-free graph is a $K_3$-free graph or, equivalently, a $C_3$-free graph, where $K_k$ and $C_k$ respectively denote the clique and the cycle on $k$ vertices. The algorithm in~\cite{CensorHillelFS16} for testing bipartiteness (of bounded degree networks) requires $O(\log n)$ rounds in the \Congest\ model, and the authors conjecture that this is optimal. However, quite interestingly, the algorithm for testing triangle-freeness requires only a constant number of rounds, $O(1/\epsilon^2)$, i.e., it depends only on the (fixed) parameter $\epsilon$ quantifying the $\epsilon$-far relaxation. 

In this paper, we investigate the following question: what are the (connected) graphs $H$ for which testing $H$-freeness can be done in a constant number of rounds in the \Congest\ model?

\subsection{Our results} 

We show that, for every connected 4-node  graph $H$, testing whether a graph is $H$-free can  be done in $O(1/\epsilon^2)$ rounds. Hence, in particular, testing whether a graph is $K_4$-free, and  testing whether a graph is $C_4$-free can be done in a constant number of rounds. Our algorithm is generic in the sense that, for all 4-node  graphs $H$, the global communication structure of the algorithm is the same, with only a variant in the final decision for accepting or rejecting, which of course depends on $H$. 

In fact, we identify two different natural generic types of testing algorithms for $H$-freeness. We call the first type DFS tester, and our algorithm for testing the presence of 4-node patterns is actually the DFS tester. Such an algorithm applies to Hamiltonian graphs $H$, i.e., graphs $H$ containing a simple path spanning all its vertices. It performs in $|H|-1$ rounds.  Recall that, for a node set $A$, $G[A]$ denotes the subgraph of $G$ induced by $A$. At each round $t$ of the DFS tester, at every node $u$, and for each of its incident edges $e$, node $u$ pushes a graph $G[A_t]$ (where, initially, $A_0$ is just the graph formed by $u$ alone). The graph $G[A_t]$ is chosen u.a.r. among the  sets of graphs received from the neighbors at the previous round. More specifically, upon reception of every graph $G[A_{t-1}]$ at round $t-1$, node $u$ forms a graph $G[A_{t-1}\cup \{u\}]$, and the graph $G[A_t]$ pushed by $u$ along $e$ at round $t$ is chosen u.a.r. among the collection of graphs $G[A_{t-1}\cup \{u\}]$ currently held by~$u$. 

We call BFS tester the second type of generic testing algorithm for $H$-freeness. The algorithm in~\cite{CensorHillelFS16} for triangle-freeness is a simplified variant of the BFS tester, and we prove that the BFS tester can test $K_4$-freeness in  $O(1/\epsilon^2)$ rounds. Instead of guessing a path spanning $H$ (which may actually not exist for $H$ large), the BFS tester aims at directly guessing all neighbors in $H$ simultaneously. The algorithm performs in $D-1$ rounds, where $D$ is the diameter of $H$. For the sake of simplicity, let us assume that $H$ is $d$-regular. At each round, every node $u$ forms groups of $d$ neighboring nodes.  These groups may overlap, but a neighbor should not participate to more than a constant number of groups. Then for every  edge $e=\{u,v\}$ incident to $u$, node $u$ pushes all partial graphs of the form $G[A_{t-1}\cup \{v_1,\dots,v_t\}]$ where $v\in\{v_1,\dots,v_t\}$,  $A_{t-1}$ is chosen u.a.r. among the graphs received at the previous round, and the $v_i$'s will be in charge of checking the presence of edges between them and the other $v_j$'s. 

We prove that neither the DFS tester, nor the BFS tester can test $K_k$-freeness in a constant number of rounds for $k\geq 5$, and that the same holds for testing $C_k$-freeness (with the exception of a finite number of small values of $k$). This shows that testing  $K_k$-freeness or $C_k$-freeness for $k\geq 5$ in a constant number of rounds requires, if at all possible, to design algorithms which explore $G$ from each node in a way far more intricate than just parallel DFSs or parallel BFSs. Our impossibility results, although restricted to DFS and BFS testers, might be hints that testing $K_k$-freeness and $C_k$-freeness for $k\geq 5$ in $n$-node networks does require to perform a number of rounds which grows with the size~$n$ of the network. 

\subsection{Related work} 

The \Congest\ model  has become a standard model in distributed computation~\cite{Peleg}. Most of the lower bounds in the  \Congest\ model  have been obtained using reduction to  communication complexity problems~\cite{elkin2006unconditional,lotker2006distributed,SarmaHKKNPPW12}.  The so-called \emph{congested clique} model is the \Congest\ model in the complete graph $K_n$~\cite{drucker2014,lenzen2013,lotker03,patt2011}. There are extremely fast protocols for 
solving different types of graph problems in the congested clique, including finding ruling sets~\cite{hegeman2014b}, constructing minimum spanning trees~\cite{Hegeman2015}, and, closely related to our work, detecting small subgraphs~\cite{censor2015algebraic,dolev2012tri}. 

The distributed property testing framework in the  \Congest\ model was recently introduced in the aforementioned paper~\cite{CensorHillelFS16}, inspired from classical property testing~\cite{Goldreich2010,goldreich1998property}. Distributed property testing relaxes classical distributed decision~\cite{FraigniaudGKS13,fraigniaud2011local,GoosS11}, typically designed for the \textsf{LOCAL} model, by ignoring illegal instances which are less than $\epsilon$-far from satisfying the considered property. Without such a relaxation, distributed decision in the  \Congest\ model requires non-constant number of rounds~\cite{SarmaHKKNPPW12}. Other variants of local decision in the \textsf{LOCAL} model have been studied in~\cite{arfaoui2014distributedly,arfaoui2013local}, where each process can output an arbitrary value (beyond just a single bit --- accept or reject), and~\cite{becker2015allowing,kari2014solving}, where nodes are bounded to perform a single round before to output at most $O(\log n)$ bits. Distributed decision has been also considered in other distributed environments, such as shared memory~\cite{FraigniaudRT13}, with application to runtime verification~\cite{FraigniaudRT14},  and networks of finite automata~\cite{Reiter15}. 

\section{Detecting Small Graphs Using a DFS Approach}

In this section, we establish our main positive result, which implies, in particular, that testing $C_4$-freeness and testing $K_4$-freeness can be done in constant time in the  \Congest\ model.

\begin{theorem}\label{th:p4f}
Let $H$ be a graph on four vertices, containing a $P_4$ (a path on four vertices) as a subgraph. There is a 1-sided error distributed property-testing algorithm for $H$-freeness, performing in a constant number of rounds  in the \Congest\ model.
\end{theorem}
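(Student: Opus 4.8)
The plan is to implement the "DFS tester" described informally in the introduction and analyze its success probability. Since $H$ is a $4$-node graph containing $P_4$, fix a Hamiltonian path $x_1 x_2 x_3 x_4$ of $H$. The algorithm runs for $3$ rounds. Each node starts holding the trivial $1$-vertex graph. At round $t$ (for $t=1,2,3$), every node $u$ looks at the collection of subgraphs it received from its neighbors in round $t-1$; for each incident edge $e=\{u,v\}$, it picks one such received subgraph $G[A_{t-1}]$ uniformly at random among those received along edges *other than* $e$ (or among all of them, with an appropriate bookkeeping of where each came from), forms $G[A_{t-1}\cup\{u\}]$ by testing adjacency of $u$ to the $\le 3$ vertices currently in $A_{t-1}$ — which costs only $O(\log n)$ bits since there are at most $4$ identities involved — and pushes it along $e$. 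After round $3$, any node $u$ that ever constructed an induced subgraph $G[A_3]$ on $4$ vertices containing a copy of $H$ outputs \textbf{reject}; all other nodes output \textbf{accept}.

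\medskip

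**One-sidedness** is immediate: the algorithm only rejects when it has explicitly found a set of $4$ vertices whose induced subgraph contains $H$ as a subgraph, so if $G$ is $H$-free, every node accepts with probability $1$. The substance is the rejection guarantee. Suppose $G$ is $\epsilon$-far from $H$-free. The standard observation (as in the triangle case of~\cite{CensorHillelFS16}) is that an $\epsilon$-far graph must contain many edge-disjoint copies of $H$ — at least $\epsilon m / \binom{4}{2} = \Omega(\epsilon m)$ of them, since otherwise deleting one edge from each copy would make $G$ an $H$-free graph using fewer than $\epsilon m$ edge deletions. For a single fixed copy of $H$ with its Hamiltonian path $v_1 v_2 v_3 v_4$, we bound below the probability that this copy gets detected: the detection happens if $v_1$ pushes $G[\{v_1\}]$ along $\{v_1,v_2\}$ — which it always does in round $1$ — then $v_2$ picks exactly that message and pushes $G[\{v_1,v_2\}]$ along $\{v_2,v_3\}$, then $v_3$ picks it and pushes $G[\{v_1,v_2,v_3\}]$ along $\{v_3,v_4\}$, and finally $v_4$ picks it, forming $G[\{v_1,v_2,v_3,v_4\}]\supseteq H$. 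Each "pick the right message" event has probability at least $1/\deg(v_i)$ (the node chooses uniformly among at most $\deg(v_i)$ received subgraphs), so this copy is detected with probability at least $\frac{1}{\deg(v_2)\deg(v_3)\deg(v_4)}$.

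\medskip

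**The main obstacle** — exactly as in the triangle argument — is that summing $\frac{1}{\deg(v_2)\deg(v_3)\deg(v_4)}$ over all $\Omega(\epsilon m)$ edge-disjoint copies of $H$ is not obviously $\Omega(1)$: a copy through high-degree vertices contributes little. The resolution is a weighting / charging argument. First, detection events for different copies are not independent, so I will not try to union-bound over all copies directly; instead I will show that the \emph{expected number} of detected edge-disjoint copies is $\Omega(1)$, hence with constant probability at least one is detected, and then amplify by repeating the whole $3$-round protocol $O(1/\epsilon^2)$ times in parallel (independent randomness, still $O(\log n)$ bits per message since only a constant number of parallel runs contribute a constant factor). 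To get the $\Omega(1)$ expectation, I partition the edge-disjoint copies by a dyadic bucketing of $\deg(v_2)\deg(v_3)\deg(v_4)$: in the bucket where this product is $\approx 2^k$, the high-degree structure forces the number of copies in that bucket to be large enough that $(\text{\#copies in bucket}) \cdot 2^{-k} = \Omega(\epsilon^2 m / \,?\,)$; combined over buckets via a handshake/degree-sum argument ($\sum_v \deg(v) = 2m$) one recovers that the total expected number detected is $\Omega(\epsilon^2)$ per run. I expect this degree-bucketing combinatorial lemma — showing that $\epsilon$-farness plus the handshake bound forces enough "low-degree-path" copies of $H$ — to be the technical heart of the proof, and it is where the specific structure "$H$ has $4$ vertices and contains $P_4$" is used (so that a constant-round DFS along a spanning path suffices and the relevant products involve only three degrees). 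The dependence on $\epsilon$ will come out as $O(1/\epsilon^2)$, matching the triangle bound of~\cite{CensorHillelFS16}, as claimed.
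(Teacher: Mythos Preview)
Your high-level plan matches the paper's, but there are two concrete errors that break the argument as written.

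\medskip

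\textbf{The detection probability.} The endpoint $v_4$ should not ``pick'' anything: it simply checks every $3$-vertex subgraph it receives (one per neighbor), so the correct lower bound on detecting a fixed copy along the path $(v_1,v_2,v_3,v_4)$ is $1/(d(v_2)\,d(v_3))$, not $1/(d(v_2)\,d(v_3)\,d(v_4))$. This matters: the paper's combinatorial step rests on the identity
\[
\sum_{\{u,v\}\in E(G)} d(u)\,d(v)\;\le\; 2m^2,
\]
which lets one show that all but $O(\epsilon m)$ edges satisfy $d(u)d(v)\le O(m/\epsilon)$ (``good'' edges). Intersecting these with the $\Omega(\epsilon m)$ ``middle edges'' of the edge-disjoint copies gives $\Omega(\epsilon m)$ copies each detected with probability $\Omega(\epsilon/m)$, hence expected count $\Omega(\epsilon^2)$. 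With a triple product of degrees there is no analogous edge-indexed inequality, and your dyadic bucketing sketch does not supply one.

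\medskip

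\textbf{Independence and the expectation-to-probability step.} You assert that detection events for different edge-disjoint copies are not independent, and then conclude ``expected number $\Omega(1)$, hence with constant probability at least one is detected''. The second clause does not follow from the first: $\E[N]\ge c$ gives no lower bound on $\Pr[N\ge 1]$ without variance control or independence. In fact the events \emph{are} independent: the random choice at each node is made separately \emph{per outgoing edge}, and two edge-disjoint copies use disjoint (node,\,edge) pairs for the relevant choices (those of $v_2$ on $\{v_2,v_3\}$ and of $v_3$ on $\{v_3,v_4\}$). This independence is exactly what lets the paper sum the indicators, repeat $O(1/\epsilon^2)$ times, and apply a Chernoff bound to get $\Pr[N=0]\le 1/3$.

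\medskip

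So: drop the spurious $d(v_4)$, replace the bucketing by the good-edge/important-edge count sketched above, and use (rather than deny) the per-edge independence to finish with Chernoff. With those fixes your outline becomes the paper's proof.
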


\begin{proof}
We show the theorem by exhibiting a generic distributed testing protocol for testing $H$-freeness, that applies to any graph $H$ on four vertices containing a $P_4$ as a subgraph. The core of this algorithm is presented as Algorithm~\ref{al:Pk}.  Note that the test $H\subseteq G[u,u',v',w']$ at step~4 of Algorithm~\ref{al:Pk} can be performed  thanks to the bit $b$ that tells about the only edge that node~$u$ is not directly aware of. Algorithm~\ref{al:Pk} performs in just two rounds (if we omit the round used to acquire the identities of the neighbors), and that a single $O(\log n)$-bit message is sent through every edge at each round. Clearly, if $G$ is $H$-free, then all nodes accept. 

\begin{algorithm}[htb]
\caption{Testing $H$-freeness for 4-node Hamiltonian $H$. Instructions for node $u$. }
\label{al:Pk}
Send $\id(u)$ to all neighbors; \\
For every neighbor $v$, choose a received identity $\id(w)$ u.a.r., and send $(\id(w),\id(u))$ to~$v$;\\ 
For every neighbor $v$, choose a received pair $(\id(w'),\id(u'))$ u.a.r., and send $(\id(w'),\id(u'),\id(u),b)$ to~$v$, where $b=1$ if $w'$ is a neighbor of $u$, and $b=0$ otherwise;\\
For every received 4-tuple $(\id(w'),\id(v'),\id(u'),b)$, check whether $H\subseteq G[u,u',v',w']$; \\
If $H\subseteq G[u,u',v',w']$ for one such 4-tuple then  reject else  accept. 
\end{algorithm}

In order to analyze the efficiency of Algorithm~\ref{al:Pk} in case $G$ is $\epsilon$-far from being $H$-free, let us consider a subgraph $G[\{u_1, u_2, u_3,u_4\}]$ of $G$ containing $H$, such that $(u_1, u_2, u_3,u_4)$ is a $P_4$ spanning $H$.  Let $\cal{E}$ be the event ``at step~2, vertex $u_2$ sends $(\id(u_1),\id(u_2))$ to its neighbor~$u_3$''. We have $\Pr[{\cal E}]=1/d(u_2)$. Similarly, let $\cal{E'}$ be the event ``at step~3, vertex $u_3$ sends $(\id(u_1),\id(u_2),\id(u_3))$ to its neighbor~$u_4$''.  We have $\Pr[ {\cal E'} | {\cal E}]=1/d(u_3)$. Since 
$
\Pr[ {\cal E} \wedge {\cal E'}]= \Pr[ {\cal E'} | {\cal E}]\cdot \Pr[{\cal E}], 
$
 it follows that 
\begin{equation}\label{eq:proba-cycle}
\Pr[H\; \mbox{is detected by $u_4$ while performing Algorithm~\ref{al:Pk}}]\geq \frac{1}{d(u_2)d(u_3)}. 
\end{equation}
Note that the events ${\cal E}$ and ${\cal E'}$ only depend on the choices made by $u_2$ for the edge $\{u_2,u_3\}$ and by $u_3$ for the edge $\{u_3,u_4\}$, in steps~3 and~4 of Algorithm~\ref{al:Pk}, respectively. Since these choices are performed  independently at all nodes, it follows that if $H_1$ and $H_2$ are edge-disjoint copies of $H$ in $G$, then the events ${\cal E}_1$  and ${\cal E}_2$ associated to them are independent, as well as the events ${\cal E'}_1$  and ${\cal E'}_2$

The following result will be used throughout the paper, so we state it as a lemma for further references. 

\begin{lemma}\label{lem:edge-disjoint}
Let $G$ be $\epsilon$-far from being $H$-free. Then $G$ contains at least $\epsilon m /|E(H)|$ edge-disjoint copies of $H$. 
\end{lemma}

To establish Lemma~\ref{lem:edge-disjoint}, let $S=\{e_1,\dots,e_k\}$ be a smallest set of edges whose removal from $G$ results in an  $H$-free graph. We have $k\geq \epsilon m$. Let us then remove these edges from $G$ according to the following process. The edges are removed in arbitrary order. Each time an edge $e$ is removed from $S$, we select an arbitrary copy $H_e$ of $H$ containing $e$, we remove all the edges of $H_e$ from $G$, and we reset $S$ as $S\setminus E(H_e)$. We proceed as such until we have exhausted all the edges of $S$. Note that each time we pick an edge $e\in S$, there always exists a copy $H_e$ of $H$ containing $e$. Indeed, otherwise, $S\setminus\{e\}$ would also be a set  whose removal from $G$ results in an  $H$-free graph, contradicting the minimality of $|S|$. After at most $k$ such removals, we get a graph that is $H$-free, and, by construction,  for every two edges $e,e'\in S$, we have that $H_e$ and $H_{e'}$ are edge-disjoint.  Every step of this process removes at most $|E(H)|$ edges from~$S$, hence the process performs at least $\epsilon m /|E(H)|$ steps before exhausting all edges in~$S$. Lemma~\ref{lem:edge-disjoint} follows.  

\medskip

Let us now define an edge $\{u,v\}$ as \emph{important} if it is the middle-edge of a $P_4$ in one of the $\epsilon m/|E(H)|$ edge-disjoint copies of $H$ constructed in the proof of Lemma~\ref{lem:edge-disjoint}. We denote by $I(G)$ the set of all important edges. Let $N_0$ be the random variable counting the number of distinct copies of $H$ that are detected by Algorithm~\ref{al:Pk}. As a direct consequence of Eq~\eqref{eq:proba-cycle}, we get that 
\[
\E(N_0) \geq \sum_{\{u,v\} \in I(G)} \frac{1}{d(u)d(v)}.
\]
Define an edge $\{u,v\}$ of $G$ as \emph{good} if $d(u)d(v) \leq 4 m |E(H)| / \epsilon$, and let $g(G)$ denote the set of good edges. Note that if there exists a constant $\gamma>0$ such that $|I(G)\cap g(G)|\geq \gamma m$, then the expected number of copies of $H$ detected during a phase is 
\begin{equation}\label{eq:expwithgamma}
\E(N_0) \geq \sum_{\{u,v\} \in I(G)\cap g(G)} \frac{1}{d(u)d(v)} \geq \gamma m \, \frac{1}{4 m |E(H)|/\epsilon} = \frac{\gamma \epsilon}{4 |E(H)|}.
\end{equation}
We now show that the number of edges that are both important and good is indeed at least a fraction $\gamma$ of the egdes, for some constant $\gamma>0$. We first show that  $G$ has at least $(1-\frac{3}{4|E(H)|} \epsilon) m$ good edges. Recall that $\sum_{u \in V(G)} d(u) = 2m$, and define $N(u)$ as the set of all neighbors of node~$u$. We have 
\[
\sum_{\{u,v\} \in E(G)} d(u)d(v) = \frac12  \sum_{u \in V(G)} d(u) \sum_{v\in N(u)} d(v) \leq \sum_{u\in V(G)}d(u) \; m \leq 2 m^2.
\]
Thus $G$ must have at least $(1-\frac{3}{4|E(H)|}\epsilon) m$ good edges, since otherwise
\[
\sum_{\{u,v\} \in E(G)} d(u)d(v) \geq \sum_{\{u,v\} \in E(G)\setminus g(G)} d(u)d(v) > \frac{3}{4|E(H)|} \epsilon m  \frac{4m|E(H)|}{\epsilon} = 3 m^2,
\]
contradicting the aforementioned $2 m^2$ upper bound. Thus, $G$ has at least $(1 - \frac{3}{4|E(H)|} \epsilon)m$ good edges. On the other hand, since the number of important edges is at least the number of edge-disjoint copies of $H$ in $G$, there are at least $\epsilon m /|E(H)|$ important edges. It follows that the number of edges that are both important and good  is at least $\frac{\epsilon}{4|E(H)|}m$. 
Therefore, by Eq.~\eqref{eq:expwithgamma}, we get that 
\[
\E(N_0) \geq \left (\frac{\epsilon}{4 |E(H)|} \right)^2.
\]
All the above calculations were made on the $\epsilon m/|E(H)|$ edge-disjoint copies of $H$ constructed in the proof of Lemma~\ref{lem:edge-disjoint}. Therefore, if $X_i^{(0)}$ denotes the random variable satisfying $X_i^{(0)}=1$ if the $i$th copy $H$ is detected, and $X_i^{(0)}=0$ otherwise, then we have  $N_0 = \sum_{i=1}^{\epsilon m/|E(H)|}X_i^{(0)}$, and the variables $X_i^{(0)}$, $i=1,\dots,\epsilon m/|E(H)|$, are mutually independent.  Let 
\[
T= 8 \ln 3 \; \left (\frac{4 |E(H)|}{\epsilon} \right)^2.
\]
By repeating the algorithm $T$ times, and defining $N=\sum_{t=0}^{T-1}N_t$ where $N_t$ denotes the number of copies of $H$ detected at the $t$th independent repetition, we get 
\[
\E(N)\geq 8 \ln 3. 
\]
In fact, we also have $N=\sum_{t=0}^{T-1}\sum_{i=1}^{\epsilon m/|E(H)|}X_i^{(t)}$ where $X_i^{(t)}=1$ if the $i$th copy $H$ is detected at the $t$th iteration of the algorithm, and $X_i^{(t)}=0$ otherwise. All these variables are mutually independent, as there is mutual independence within each iteration, and all iterations are performed independently. Therefore, Chernoff bound applies (see Theorem~4.5 in \cite{Mitzenmacher}), and so, for every $ 0 < \delta < 1$, we have 
\[\Pr[N \leq (1-\delta) \E[N]] \leq e^{-\delta^2 \E[N] /2}.\] 
By taking $\delta =  \frac12$ we get  
\[
\Pr[N \leq 4 \ln 3] \leq \frac13.
\]
Therefore, a copy of $H$ is detected with probability at least $\frac{2}{3}$, as desired, which completes the proof of Theorem~\ref{th:p4f}. 
\end{proof}

\paragraph{Remark.}  Theorem~\ref{th:p4f} requires the existence of a $P_4$ in the graph $H$. All connected graphs $H$ of four vertices satisfy this condition, with the only exception of the \emph{star} $K_{1,3}$ (a.k.a.~the \emph{claw}). Nevertheless, detecting whether a graph $G$ has a star $K_{1,3}$ as a subgraph is trivial (every node rejects whenever its degree at least three).

\section{Limits  of the DFS Approach}

Algorithm~\ref{al:Pk} can be extended in a natural way to any  $k$-node graph $H$ containing a Hamiltonian path, as follows. Given a graph $G=(V,E)$, and $A\subseteq V$, let $G[A]=(A,E_A)$ denote the subgraph of $G$ induced by $A$ (i.e., for every two nodes $u$ and $v$ in $A$, $\{u,v\}\in E_A \iff \{u,v\}\in E$). The extension of Algorithm~\ref{al:Pk} to Hamiltonian graphs $H$ is depicted in Algorithm~\ref{al:PkGen}.  At the first round, every vertex $u$ sends its identifier to its neighbors, and composes the $d(u)$ graphs formed by the edge $\{u,v\}$, one for every neighbor $v$. Then, during the $k-2$ following rounds, every node $u$ forwards through each of its edges one of the graphs formed a the previous round. 

\begin{algorithm}[htb]
\caption{Testing $H$-freeness: Hamiltonian $H$, $|V(H)|=k$. Instructions for node~$u$. }
\label{al:PkGen}
\SetKwFor{RepeatTimes}{repeat}{times}{}
send the 1-node graph $G[u]$ to every neighbor $v$; \\
form the graph $G[\{u,v\}]$ for every neighbor $v$; \\
\RepeatTimes {$k-2$}
{
	\For {every neighbor $v$}
	{
		choose a graph $G[A]$ u.a.r. among those formed during the previous round; \\
	 	send $G[A]$ to $v$; \\
		receive the graph $G[A']$ from $v$;\\
	 	form the graph $G[A'\cup \{u\}]$;\\
	}
}
if $H \subseteq G[A]$ for one of the graphs formed at the last round then reject else accept.
\end{algorithm}

Let $(u_1,u_2,\dots,u_k)$ be a simple path in $G$, and assume that $G[\{u_1,u_2,\dots,u_k\}]$ contains $H$. If, at each round $i$, $2 \leq i <k$, vertex $u_i$ sends to $u_{i+1}$ the graph $G[\{u_1,\dots,u_i\}]$, then, when the repeat-loop completes, vertex $u_k$ will test precisely the graph $G[\{u_1,u_2,\dots,u_k\}]$, and thus $H$ will be detected by the algorithm. 

Theorem~\ref{th:p4f} states that Algorithm~\ref{al:PkGen} works fine for 4-node graphs $H$. We show that, $k=4$ is precisely the limit of detection for graphs that are $\epsilon$-far from being $H$-free, even for  the cliques and the cycles.

\begin{theorem}\label{th:neg}
Let $H = K_k$ for arbitrary $k\geq 5$, or $H = C_k$ for arbitrary odd $k\geq 5$. There exists a graph $G$ that is $\epsilon$-far from being $H$-free in which any constant number of repetitions of Algorithm~\ref{al:PkGen} fails to detect $H$, with probability at least $1-o(1)$. 
\end{theorem}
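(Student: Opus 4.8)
The plan is to exhibit, for each relevant $H$, a single graph $G$ that simultaneously (i) is $\epsilon$-far from being $H$-free, so it contains $\Omega(m)$ edge-disjoint copies of $H$ by Lemma~\ref{lem:edge-disjoint}, yet (ii) has the property that whenever Algorithm~\ref{al:PkGen} forwards a growing induced subgraph $G[A]$ along a simple path, the subgraph $G[A]$ never manages to ``close up'' into a copy of $H$ at the final round, except with vanishing probability. The key structural idea is that Algorithm~\ref{al:PkGen} only detects $H$ if some node, at the last round, holds an induced subgraph on $k$ vertices $\{u_1,\dots,u_k\}$ arising as the vertex set of a simple path $u_1,u_2,\dots,u_k$ that was actually traced out by the forwarding process; crucially, at round $i$ the node $u_i$ has no information about which vertices were visited more than two steps earlier, so it cannot bias its choice toward completing a copy of $H$. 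Thus the probability that a particular copy of $H$ is detected is governed by the product $\prod 1/d(u_i)$ over the internal path vertices, exactly as in Eq.~\eqref{eq:proba-cycle}, and the construction must force these degrees to be large (growing with $n$) along every path that could possibly span a copy of $H$.

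The construction I would use is a blow-up / high-girth-type gadget. For $H = K_k$: take a graph $G$ built from many edge-disjoint $k$-cliques, but arrange that every vertex lies in $\Theta(n^{\alpha})$ of these cliques for a suitable $\alpha>0$, so that $G$ is $\epsilon$-far from $K_k$-free (each clique needs $\binom{k}{2}$ edge deletions, and there are $\Theta(n^{1+\alpha}/k^2)$ of them, which is $\Omega(m)$ since $m = \Theta(n^{1+\alpha})$) while every vertex has degree $\Theta(n^{\alpha})$. Then by Eq.~\eqref{eq:proba-cycle} generalized to $k$-node Hamiltonian $H$, each fixed copy of $K_k$ is detected with probability at most $\prod_{i=2}^{k-1} 1/d(u_i) = O(n^{-(k-2)\alpha})$; summing over all $O(n^{1+\alpha})$ copies gives expected number of detections $O(n^{1+\alpha-(k-2)\alpha}) = O(n^{1-(k-3)\alpha})$, which is $o(1)$ provided $\alpha > 1/(k-3)$, possible exactly when $k\geq 5$. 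A constant number $T$ of repetitions only multiplies this by $T$, so it stays $o(1)$, and Markov's inequality gives failure with probability $1-o(1)$. For $H = C_k$ with $k$ odd, the same blow-up idea applies using edge-disjoint $k$-cycles through each vertex: here a spanning path in $G[\{u_1,\dots,u_k\}]$ needs $u_k$ adjacent to $u_1$ to close the cycle, and one must also ensure no ``shortcut'' chords create a $C_k$ on fewer-than-expected low-degree vertices — requiring the cycle gadgets to have enough girth/be chordless, which the odd parity and edge-disjointness help enforce.

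The main obstacle, and where care is needed, is bounding the detection probability of copies of $H$ that are \emph{not} one of the canonical edge-disjoint copies but arise ``by accident'' from the union of several gadgets: a simple path of the forwarding process can wander through vertices belonging to different cliques/cycles and, if the gadgets overlap unfavorably, the induced subgraph on its endpoint set might contain $H$ even though no single gadget does. To handle this I would argue that, because each vertex has degree $\Theta(n^{\alpha})$ and the gadgets are essentially ``locally tree-like'' away from the planted copies (i.e. the only short cycles are the planted ones, which is achievable by a random or algebraic construction), any copy of $H$ in $G$ must essentially coincide with a planted copy up to lower-order terms; hence the count of all copies of $H$ is still $O(n^{1+\alpha})$, and the union bound above goes through. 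A clean way to make this rigorous is to take the blow-up of a fixed small $H$-containing graph and use a counting/second-moment argument, or to invoke a known high-girth construction with the prescribed number of short cycles; I would present the cleanest of these and push the overlap bookkeeping into a short auxiliary claim. The odd-$k$ restriction for $C_k$ is exactly what rules out accidental even cycles collapsing into a shorter $C_k$ via bipartite-type identifications, which is why the statement excludes even cycles (handled, or ruled out, elsewhere).
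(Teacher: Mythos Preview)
Your overall framework is right: build a graph with $\Theta(m)$ edge-disjoint planted copies of $H$, make every vertex have degree $d = n^{\Theta(1)}$, bound the detection probability of each copy by $O(1/d^{k-2})$ via Eq.~\eqref{eq:proba-cycle-gen}, and check that the expected number of detected copies is $o(1)$ precisely when $k\geq 5$. You also correctly isolate the real obstacle: ruling out ``accidental'' copies of $H$ formed by edges from several different gadgets.

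The gap is that you do not actually solve that obstacle, and the remedies you sketch do not work. A blow-up of a small $H$-containing graph produces vastly more copies of $H$ than the planted ones (a complete $k$-partite graph has $\Theta(n^k)$ copies of $K_k$, not $\Theta(n^{1+\alpha})$). A random placement of edge-disjoint $K_k$'s at the required density likewise creates many accidental cliques: with degree $n^{\alpha}$ and $\alpha > 1/(k-3)$, a back-of-the-envelope count already gives $n^{\binom{k}{2}(\alpha-1)+k}$ copies of $K_k$, which for $k=5$ exceeds the $n^{1+\alpha}$ planted copies once $\alpha>2/3$; but you need $\alpha$ close to~$1$ to make the union bound go through comfortably. ``High girth with prescribed short cycles'' is exactly the nontrivial property you need to manufacture, not something you can invoke off the shelf at this density.

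The paper resolves this with a specific algebraic construction (Behrend-type graphs $BC(k,p)$ and $BK(k,p)$): a $k$-partite graph on $kp$ vertices where the planted copies are indexed by pairs $(i,x)\in\{0,\dots,p-1\}\times X$, and $X\subset\{0,\dots,p-1\}$ is a set of size $p^{1-o(1)}$ with no nontrivial solutions to $\sum_{i=1}^{k-1} x_i \equiv (k-1)x_k \pmod p$ (their Lemma~\ref{le:largeX}). The $k$-partite layering plus odd parity forces any $C_k$ to hit each layer once; the arithmetic condition on $X$ then forces every $C_k$ (resp.\ $K_k$) in $G$ to be \emph{exactly} one of the planted ones --- not ``essentially'' or ``up to lower-order terms'', but literally all of them. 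With that in hand, the graph has precisely $pp'$ copies of $H$, is regular of degree $\Theta(p')$ with $p'=p^{1-o(1)}$, and the union bound gives expected detections $O(p/(p')^{k-3}) = o(1)$ for $k\geq 5$. Your proposal is missing this construction, and without it (or a genuine substitute) the argument does not close.
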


For the purpose of proving Theorem~\ref{th:neg}, we use the following combinatorial result, which extends Lemma~7 of~\cite{AKKR08}, where the corresponding claim was proved for $k = 3$, with a similar proof. The bound on $p'$ is not even nearly optimal in Lemma~\ref{le:largeX} below, but it is good enough for our purpose\footnote{The interested reader can consult \cite{Schoen2014} for the state-of-the-art on such combinatorial constructions,  in particular constructions for $p' \geq  p^{1 - c /\sqrt{\log p}}$, for a constant $c$ depending on $k$.}.

\begin{lemma}\label{le:largeX}
Let $k$ be a constant. For any sufficiently large $p$, there exists a set $X \subset \{0,\dots, p-1\}$ of size $p'  \geq \ppr$ such that, for any $k$ elements $x_1,x_2,\dots, x_k$ of $X$,
$$ \sum_{i=1}^{k-1} x_i \equiv (k-1) x_k \pmod p \;\; \implies \;\;  x_1 = \dots = x_{k-1} = x_k.$$ 
\end{lemma}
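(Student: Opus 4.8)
The plan is to construct $X$ probabilistically using the standard Behrend-type argument adapted to the modular equation $\sum_{i=1}^{k-1} x_i \equiv (k-1)x_k \pmod p$. First I would observe that this congruence says exactly that the average of $x_1,\dots,x_{k-1}$ equals $x_k$ modulo $p$, i.e., $(x_1,\dots,x_{k-1},x_k)$ is a ``$(k-1)$-term arithmetic-mean'' configuration; the goal is a large set in $\mathbb{Z}_p$ avoiding all nontrivial such configurations. The classical Behrend idea is to lift to a higher-dimensional box and use strict convexity of the sphere: choose a dimension $n$ and a side length $s$ with $s^n$ roughly $p$, map each element of $\{0,\dots,p-1\}$ to its base-$s$ digit vector, and keep only those vectors lying on a sphere $\sum_j v_j^2 = R$ for a suitably chosen radius $R$. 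Because a sphere in $\mathbb{R}^n$ contains no nontrivial configuration whose coordinates are an affine combination with the ``averaging'' structure — any solution to the averaging equation that stays on the sphere must be the constant solution, by strict convexity — the only surviving solutions are the trivial ones $x_1 = \dots = x_k$, provided the digit arithmetic does not wrap around (which is why one takes the digits in a range small enough, say $\{0,\dots,s-1\}$ with $s$ chosen so that sums of $k-1$ digits stay below the modulus of the lift).

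The key steps, in order, would be: (1) fix $n = \lfloor \sqrt{\log p}\,\rfloor$ or a similar choice and set $s \approx p^{1/n}$ so that the map $x \mapsto (v_0,\dots,v_{n-1})$ sending $x = \sum_j v_j s^j$ to its digit vector is injective on $\{0,\dots,p-1\}$; (2) by averaging, find a radius $R$ such that the sphere $\{v : \sum_j v_j^2 = R,\ 0 \le v_j < s/(k-1)\}$ (restricting digits to a sub-box so that no carrying occurs when we add up to $k-1$ of them) contains at least a $\frac{1}{(k-1)^n \cdot n s^2}$ fraction of all digit vectors in the sub-box, hence at least roughly $p / ((k-1)^n n s^2)$ points; (3) let $X$ be the image of this sphere back in $\{0,\dots,p-1\}$; (4) verify the avoidance property: if $x_1,\dots,x_k \in X$ satisfy $\sum_{i=1}^{k-1} x_i \equiv (k-1)x_k \pmod p$, then because all digits are $< s/(k-1)$ the equation holds without any mod-$p$ reduction and digit-by-digit, giving $\sum_{i=1}^{k-1} \vec{v}_i = (k-1)\vec{v}_k$ in $\mathbb{Z}^n$; then $\vec{v}_k$ is the centroid of $\vec{v}_1,\dots,\vec{v}_{k-1}$, and since all these vectors lie on a sphere of radius $\sqrt R$ centered at the origin, strict convexity of the Euclidean ball forces $\vec{v}_1 = \dots = \vec{v}_{k-1} = \vec{v}_k$, hence $x_1 = \dots = x_k$; (5) finally, estimate the size: with $n \asymp \sqrt{\log p}$ one gets $p' \ge p^{1 - O(\log\log p / \log p) \cdot n} \cdot p^{-o(1)}$, and a more careful bookkeeping of the $(k-1)^n$, $n$, and $s^2$ factors yields $p' \ge p^{\,1 - \frac{\log\log\log p + 4}{\log\log p}}$ for large $p$; here one would deliberately choose $n$ a bit smaller than the Behrend-optimal $\sqrt{\log p}$ (e.g. $n \approx \log\log p$) precisely to make the stated, non-optimal bound come out cleanly.

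The step I expect to be the main obstacle is not the convexity argument (which is the well-understood heart of Behrend's construction and carries over verbatim once one notes that the averaging equation, like the 3-AP equation, is preserved under the digit lift) but rather the careful choice of parameters and the ``no-carry'' bookkeeping: one must simultaneously ensure that $s^n \gtrsim p$ so the digit map is surjective enough to capture a positive proportion of $\{0,\dots,p-1\}$, that digits are confined to a window of width $s/(k-1)$ so that summing $k-1$ of them never triggers a carry and never wraps modulo $p$, and that the resulting lower bound on $|X|$ still beats $\ppr$. Getting all three constraints to hold at once requires choosing $n$ as a slowly growing function of $p$ and tracking the constant $4$ in the exponent through the pigeonhole step; since Lemma~\ref{le:largeX} explicitly disclaims optimality, I would err on the side of a crude but transparent parameter choice (e.g. $n = \lceil \log\log p \rceil$, $s = \lceil p^{1/n}\rceil$) and simply verify the inequality $p' \ge \ppr$ holds for all sufficiently large $p$ by a direct logarithmic estimate, rather than optimizing.
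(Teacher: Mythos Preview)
Your proposal is correct and follows the classical Behrend sphere-plus-pigeonhole construction, but the paper takes a genuinely different (and slightly slicker) route. Instead of choosing a radius $R$ by pigeonhole, the paper fixes base $b=\lfloor\log p\rfloor$ and length $a=\lfloor\log p/\log\log p\rfloor$, and takes $X$ to be the set of all integers whose $b$-ary digit sequence is a \emph{permutation} of $\{0,1,\dots,a-1\}$. This forces every element of $X$ to have the same multiset of digits, so the sum of squared digits is automatically the constant $\sum_{i=0}^{a-1}i^2$ --- every element already lies on one fixed sphere, and no averaging step is needed. The ``no carry'' and ``no wraparound'' conditions are then handled by the single observation $a<b/k$, after which the equation $\sum_{l<k}x_l=(k-1)x_k$ holds digit-by-digit; the paper finishes with Cauchy--Schwarz on each coordinate (equivalent to your strict-convexity argument) and uses that both sides of the summed inequality equal $(k-1)\sum i^2$ to force equality everywhere. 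The size bound is simply $|X|=a!$, and the stated exponent drops out of Stirling.

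What each approach buys: the paper's permutation trick avoids the pigeonhole over radii entirely and gives an explicit, deterministic $X$ with a one-line size count; your approach is the textbook Behrend construction, is more flexible in its parameters (and indeed yields the stronger $p^{1-c/\sqrt{\log p}}$ if one optimizes), but requires the extra averaging step and somewhat more bookkeeping to hit the exact exponent $1-\frac{\log\log\log p+4}{\log\log p}$ stated here.
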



\begin{proof}
Let $b = \lfloor \log p \rfloor$ and $a = \left\lfloor \frac{\log p}{\log \log p} \right\rfloor$. Take $p$ sufficiently large so that $a < b/k$ is satisfied. $X$~is a set of integers encoded in base $b$, on $a$ $b$-ary digits, such that the digits of each $x \in X$ are a permutation of $\{0,1,\dots, a-1\}$. More formally, for any permutation $\pi$ over $\{0, \dots, a-1\}$, let $N_\pi = \sum_{i=0}^{a-1} \pi(i) b^i$. Then, let us set
$
X = \{N_{\pi} \mid \pi \text{~is a permutation of~} \{0,\dots, a-1\}\}.
$
Observe that different permutations $\pi$ and $\pi'$ yield different numbers $N_{\pi}$ and $N_{\pi'}$ because these numbers have different digits in base $b$.  Hence $X$ has $p' = a!$ elements. Using the inequality $z! > (z/e)^z$ as in \cite{AKKR08} (Lemma~7), we get that  $a! \geq  \ppr$, as desired. 

Now, for any $x \in X$, we have $x \leq p/k$. Indeed,  $x < a\cdot b^{a-1} \leq \frac{1}{k} b^a$, and $b^a \leq (\log p)^\frac{\log p}{\log \log p} = p$. Consequently, the modulo in the statement of the Lemma becomes irrelevant, and we will simply consider integer sums. Let $x_1,\dots,x_{k-1},x_k$ in $X$, such that  $\sum_{l=1}^{k-1} x_i = (k-1) x_k$. Viewing the $x_i$'s as integers in base~$b$, and having in mind that all digits are smaller than $b/k$, we get that the equality must hold coordinate-wise.  For every $1 \leq l \leq k$, let $\pi_l$ be the permutation such that $x_l = N_{\pi_l}$. For every $i \in {0, \dots, a-1}$, we have
\[
\sum_{l=1}^{k-1} \pi_l(i) = (k-1) \pi_k(i).
\]
By the Cauchy-Schwarz inequality applied to vectors $(\pi_1(i),\dots,\pi_{k-1}(i))$ and $(1,\dots, 1)$, for every $i \in \{0, \dots, a-1\}$, we also have
\[
\sum_{l=1}^{k-1} \left(\pi_l(i)\right)^2 \geq (k-1) \left(\pi_k(i)\right)^2.
\]
Moreover equality holds if and only if  $\pi_1(i) = \dots =  \pi_{k-1}(i) = \pi_k(i)$. By summing up the $a$ inequalities induced by the $a$ coordinates, observe that both sides sum to exactly $(k-1)\sum_{i=0}^{a-1} i^2$. Therefore, for every $i$, the Cauchy-Schwarz inequality is actually an equality, implying that the $i$th digit is identical in all the $k$ integers $x_1, \dots, x_k$. As a consequence, $x_1 = \dots = x_{k-1} = x_k$, which completes the proof. 
\end{proof}

\paragraph{Proof of Theorem~\ref{th:neg}.}

Assume that $G[\{u_1,u_2,\dots,u_k\}]$ contains $H$, where $(u_1,u_2,\dots,u_k)$ is a path of $G$. For  $2 \leq i \leq k-1$, let us consider the event ``at round $i$, vertex $u_i$ sends  the graph $G[u_1,\dots,u_i\}]$ to $u_{i+1}$''. Observe that this event happens with probability $\frac{1}{d(u_i)}$ because $u_i$ choses which subgraph to send uniformly at random among the $d(u_i)$ constructed subgraphs. With the same arguments as the ones used to establish Eq.~\eqref{eq:proba-cycle}, we get
\begin{equation}\label{eq:proba-cycle-gen}
\Pr[H\; \mbox{is detected along the path $(u_1,\dots,u_k)$}] = \frac{1}{d(u_2)d(u_3)\dots d(u_{k-1})}. 
\end{equation}
We construct families of graphs which will allow us to show, based on that latter equality, that the probability to detect a copy of $H$ vanishes with the size of the input graph $G$. We actually use a variant of the so-called \emph{Behrend graphs} (see, e.g.,~\cite{AKKR08,Behrend}), and we construct graph families indexed by $k$, and by a parameter $p$, that we denote by $BC(k,p)$ for the case of cycles, and by $BK(k,p)$ for the case of cliques. We prove that these graphs are $\epsilon$-far from being $H$-free, while the probability that Algorithm~\ref{al:PkGen} detects a copy of $H$ in these graphs goes to $0$.

\medskip

Let us begin with the case of testing cycles. Let $p$ be a large prime number, and let $X$ be a subset of $ \{0,\dots, p-1\}$ of size $p' \geq \ppr$, where $p'$ is as defined  in Lemma~\ref{le:largeX}. Graph $BC(k,p)$ is then constructed as follows. The vertex set $V$ is the disjoint union of an odd number $k$ of sets, $V^0, V^2, \dots V^{k-1}$, of $p$ elements each. For every $l$, $0 \leq l \leq k-1$, let $u^l_i$, $i=0,\dots,p-1$ be the nodes in $V^l$ so that 
\[
V^l = \{u^l_i \mid i \in \{0, \dots, p-1\}\}.
\] 
For every $i \in \{0, \dots, p-1\}$ and every $x \in X$, edges in $BC(k,p)$ form a cycle 
\[
C_{i,x} = (u^0_i,  \dots, u^l_{i+lx}, u^{l+1}_{i+(l+1)x},\dots,u^{k-1}_{i+(k-1)x}),
\]
where the indices are taken modulo $p$. The cycles $C_{i,x}$ form a set of $pp'$ edge-disjoint copies of $C_k$ in $BC(k,p)$. Indeed, for any two distinct pairs $(i,x) \neq (i',x')$, the cycles $C_{i,x}$ and $C_{i',x'}$ are edge-disjoint. Otherwise there exists a common edge $e$ between the two cycles. It can be either between two consecutive layers $V^l$ and $V^{l+1}$, or between $V^0$ and $V^{k-1}$. There are two cases. If $e=\{u^l_y,u^{l+1}_z\}$ we must have $y = i+lx = i'+lx'$ and $z = i+(l+1)x = i'+(l+1)x'$, where equalities are taken modulo $p$, and, as a consequence, $(i,x) = (i',x')$. If $e = \{u^0_y,u^{k-1}_z\}$ then we have $y=i=i'$ and $z=i+(k-1)x' = i'+(k-1)x'$, and, since $p$ is prime, we also conclude that $(i,x) = (i',x')$. 

We now show that any $k$-cycle has exactly one vertex in each set $V^l$, for $0 \leq l < k-1$. For this purpose, we focus on the parity of the layers formed by consecutive vertices of a cycle. The ``short'' edges (i.e., ones between consecutive layers) change the parity of the layer, and hence every cycle must include ``long'' edges (i.e., ones between layers $0$ and $k-1$). However, long edges do not change the parity of the layer. Therefore, every cycle contains an odd number of long edges, and an even number of short edges. For this to occur, the only possibility is that the cycle contains a vertex from each layer.

Next, we show that any cycle of $k$ vertices in $BC(k,p)$ must be of the form $C_{i,x}$ for some pair $(i,x)$. Let $C = (u^0_{y_0}, \dots u^l_{y_l}, u^{l+1}_{y_{l+1}}, \dots, u^{k-1}_{y_{k-1}})$ be a cycle in $BC(k,p)$. For every $l=1, \dots, k-1$, let $x_l = y_l - y_{l-1} \bmod p$. We have  $x_l \in X$ because the edge $\{u^{l-1}_{y_{l-1}},u^l_{y_l}\}$ is in some cycle $C_{i,x_l}$. Also set $x_k \in X$ such that the edge $\{u^0_{y_0},u^{k-1}_{y_{k-1}}\}$ is in the cycle $C_{y_0,x_k}$. In particular we must have that $y_{k-1} = y_0 + (k-1)x_k$. It follows  that $y_{k-1} = y_0 + (k-1)x_k = y_0+x_1 + x_2+ \dots + x_{k-1}$. By Lemma~\ref{le:largeX}, we must have $x_1 = \dots = x_{k-1} = x_k$, so $C$ is of the form $C_{i,x}$.

It follows from the above that $BC(k,p)$ has exactly $pp'$ edge-disjoint cycles of $k$ vertices. Since $BC(k,p)$ has $n = kp$ vertices, and $m = kpp'$ edges, $BC(k,p)$ is $\epsilon$-far from being $C_k$-free, for $\epsilon = \frac{1}{k}$. Also, each vertex of $BC(k,p)$ is of degree $2p'$ because each vertex belongs to $p'$ edge-disjoint cycles. 

Let us now consider an execution of Algorithm~\ref{al:PkGen} for input $BC(k,p)$. As $BC(k,p)$ is regular of degree $d = 2p'$, this execution has probability at most $\frac{2k}{d^{k-2}}$ to detect any given cycle $C$ of $k$ vertices. Indeed, $C$ must be detected along one of the paths formed by its vertices in graph $BC(k,p)$, there are at most $2k$ such paths in $C$ (because all $C_k$'s in $BC(k,p)$ are induced subgraphs, and paths are oriented), and, by Eq.~\eqref{eq:proba-cycle-gen}, the probability of detecting the cycle along one of its paths is $\frac{1}{d^{k-2}}$. Therefore,  applying the union bound, the probability of detecting a given cycle $C$ is at most $\frac{2k}{d^{k-2}}$.

Since there are $pp'$ edge-disjoint cycles, the expected number of cycles detected in one execution of Algorithm~\ref{al:PkGen} is at most $\frac{2k pp'}{(2p')^{k-2}}$. It follows that the expected number of cycles detected by repeating the algorithm $T$ times is at most $\frac{2k pT}{2(2p')^{k-3}}$. Consequently, the probability that the algorithm detects a cycle is at most $\frac{2k pT}{2(2p')^{k-3}}$. Plugging in the fact that, by Lemma~\ref{le:largeX}, $p' = p^{1-o(1)}$, we conclude that, for any constant $T$, the probability that $T$ repetitions of Algorithm~\ref{al:PkGen} detect a cycle goes to $0$ when $p$ goes to $\infty$, as claimed. 

\medskip

The case of the complete graph is treated similarly. Graphs $BK(k,p)$ are constructed in a way similar to $BC(k,p)$, as $k$-partite graphs with $p$ vertices in each partition (in particular,  $BK(k,p)$  also has $n = kp$ vertices). The difference with $BC(k,p)$ is that, for each pair $(i,x) \in \{0,\dots,p-1\}\times X$, we do not add a cycle, but a complete graph $K_{i,x}$ on the vertex set $\{u^0_i,  \dots, u^l_{i+lx}, u^{l+1}_{i+(l+1)x},\dots,$ $u^{k-1}_{i+(k-1)x}\}$.  By  the same arguments as for $BC(k,p)$, $BK(k,p)$ contains contains exactly $pp'$ edge-disjoint copies of $K_k$ (namely $K_{i,x}$, for each pair $(i,x)$). This fact holds even for even values of $k$, because any $k$-clique must have a vertex in each layer, no matter the parity of $k$. Thus, in particular $BK(k,p)$  has $m = \binom{k}{2}pp'$ edges, and every vertex is of degree $d= (k-1)p'$. The graph $BK(k,p)$ is $\epsilon$-far from being $K_k$-free, for $\epsilon = \frac{2}{k(k-1)}$. The probability that Algorithm~\ref{al:PkGen} detects a given copy of $K_k$ is at most $\frac{k!}{d^{k-2}}$. Indeed, a given $K_k$ has $k!$ (oriented) paths of length $k$, and, by Eq.~\eqref{eq:proba-cycle-gen}, the probability that the algorithm detects this copy along a given path is $\frac{1}{d^{k-2}}$. The expected number $\E[N]$ of $K_k$'s detected in $T$ runs of Algorithm~\ref{al:PkGen} is, as for $BC(k,p)$, at most  $\frac{k!\, Tpp'}{d^{k-2}} = \frac{k! \, pT}{(k-1)^{k-2}(p')^{k-3}}$. Therefore, since $\Pr[N\neq 0]\leq \E[N]$, we get that  the probability that the algorithm detects some $K_k$ goes to $0$ as $p$ goes to $\infty$. It follows that the algorithm fails to detect $K_k$, as claimed.
 \qed

\paragraph{Remark.} The proof that Algorithm~\ref{al:PkGen} fails to detect $C_k$ for odd $k\geq 5$, can be extended to all (odd or even) $k\geq 13$, as well as to $k=10$. The cases of $C_6$, $C_8$, are $C_{12}$ are open, although we strongly believe that Algorithm~\ref{al:PkGen} also fails to detect these cycles in some graphs.

\section{Detecting Small Graphs Using a BFS Approach}

We discuss here another very natural approach, extending the algorithm proposed by Censor-Hillel \text{et al.}~\cite{CensorHillelFS16} for testing triangle-freeness. In the protocol of~\cite{CensorHillelFS16}, each node $u$ samples two neighbors $v_1$ and $v_2$ uniformly at random, and asks them to check the presence of an edge between them. We generalize this protocol as follows. Assume that the objective is to test $H$-freeness, for a graph $H$ containing a universal vertex. Each node $u$ samples $d(u)$ sets $S_1,  \dots, S_{d(u)}$, of $|V(H)|-1$ neighbors each. For each  $i=1,\dots,d(u)$, node $u$ sends $S_i$ to all its neighbors in $S_i$, asking them to check the presence of edges between them, and collecting their answers. Based on these answers, node $u$ can  tell whether $G[S_i \cup \{u\}]$ contains $H$. We show that this very simple algorithm can be used for testing $K_4$-freeness.

\begin{theorem}\label{th:H4univ}
There is a 1-sided error distributed property-testing algorithm for $K_4$-freeness, performing in a constant number of rounds  in the \Congest\ model.
\end{theorem}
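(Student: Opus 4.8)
The plan is to exhibit the generic BFS tester described above, specialized to $H=K_4$ (so $|V(H)|-1=3$), and to mirror the analysis of Theorem~\ref{th:p4f}. The algorithm runs in $T=\Theta(1/\epsilon^2)$ independent phases. In a phase, every node $u$ samples $d(u)$ sets $S_1,\dots,S_{d(u)}$, each consisting of three distinct neighbours chosen uniformly at random; to respect the $O(\log n)$ bandwidth, $u$ then discards the ``excess'' sets so that each neighbour belongs to at most $C$ of the retained sets, for a suitable absolute constant $C$ (concretely, scanning the sets in index order, $u$ drops $S_i$ as soon as one of its members already belongs to $C$ retained sets). For each retained $S_i=\{a,b,c\}$, node $u$ sends the two other identities to each member, so at most $C$ messages of $O(\log n)$ bits traverse any edge in any direction in a round; each member replies which of the two presented vertices are among its neighbours, and $u$ rejects iff some retained $S_i$ has $K_4\subseteq G[S_i\cup\{u\}]$. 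One-sidedness is immediate: if $G$ is $K_4$-free, no node ever sees a $K_4$, so all nodes accept in every phase.

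For the quantitative part, assume $G$ is $\epsilon$-far from being $K_4$-free. By Lemma~\ref{lem:edge-disjoint} with $|E(K_4)|=6$, $G$ contains $\epsilon m/6$ edge-disjoint copies of $K_4$; fix such a family and, in each copy $\mathcal K$, designate an edge $\{a,b\}$ whose endpoint $a$ has minimum degree in $\mathcal K$, so $d(a)\leq d(b)$. First I would show that $\mathcal K$ is detected in a phase with probability at least $c_0/(d(a)d(b))$ for an absolute constant $c_0>0$: node $a$ samples the triple formed by the three other vertices of $\mathcal K$ in one of its $d(a)$ sets with probability of order $1/d(a)^2$, and, conditioned on this, a Markov/union-bound argument over the three members shows this set survives the capping with probability at least $1/2$; using $d(a)\leq d(b)$ gives $c_0/d(a)^2\geq c_0/(d(a)d(b))$ (vertices of degree $\leq 3$ being handled trivially). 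Calling $\{a,b\}$ \emph{important} if it is the designated edge of some copy (hence $\epsilon m/6$ important edges, distinct since the copies are edge-disjoint) and \emph{good} if $d(a)d(b)\leq 24m/\epsilon$, the counting argument from the proof of Theorem~\ref{th:p4f} (based on $\sum_{\{u,v\}\in E}d(u)d(v)\leq 2m^2$) yields at least $\epsilon m/24$ edges that are simultaneously important and good, so the expected number $\E[N_0]$ of copies detected in a phase is $\Omega(\epsilon^2)$. Repeating $T=\Theta(1/\epsilon^2)$ times gives $\E[N]=\Omega(1)$ detected copies in total, and a Chernoff bound as in Theorem~\ref{th:p4f} brings the probability of never detecting a $K_4$ below $1/3$.

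The main obstacle is the interaction between the bandwidth constraint and the probabilistic analysis. The capping is precisely what keeps the messages within $O(\log n)$ bits, and one must check (as sketched) that it costs only a constant factor in the per-copy detection probability. But the Chernoff step needs enough independence among the indicators $X_j$ of ``$\mathcal K_j$ detected in a given phase'': since edge-disjoint copies may still share a single vertex, and whether a set survives the capping depends on all the sets sampled at that vertex, the $X_j$ of a phase are not literally mutually independent, unlike in Theorem~\ref{th:p4f}. I would handle this either by verifying that these $X_j$ are negatively associated, so that the lower-tail Chernoff bound still applies, or by organizing the sampling so that the randomness governing the detection of $\mathcal K_j$ is confined to the designated edge $\{a_j,b_j\}$, making the $X_j$ of distinct edge-disjoint copies depend on disjoint collections of random choices, exactly as in the proof of Theorem~\ref{th:p4f}. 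Once this point is settled, the remaining computation is routine.
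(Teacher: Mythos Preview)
Your plan and the paper's proof diverge at the very first algorithmic step, and this is precisely where your acknowledged obstacle originates. Instead of sampling $d(u)$ independent triples and then capping, the paper has node $u$ pick a \emph{single} uniformly random permutation $\pi$ of its $d(u)$ neighbours and take as its $d(u)$ sets the cyclic windows $\{v_{\pi(i)},v_{\pi(i+1)},v_{\pi(i+2)}\}$, $0\le i<d(u)$. Each neighbour then lies in exactly $k-1=3$ windows, so the $O(\log n)$ bandwidth constraint holds automatically with no capping whatsoever, and a direct count gives $\Pr[u\text{ samples }K\setminus\{u\}]=d(u)\big/\binom{d(u)}{k-1}$.

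This design choice also dissolves the independence issue. The paper does not track per-copy indicators $X_j$; it sets $X_u=\mathbf 1[u\text{ detects some copy}]$ and $Y=\sum_u X_u$. Since distinct vertices draw independent permutations, the $X_u$ are mutually independent and Chernoff applies with no further argument. The price is that one must lower-bound $\Pr[X_u=1]$ when $u$ lies in $c(u)$ edge-disjoint copies; the paper handles this via Bonferroni's inequality (inclusion--exclusion truncated at the second level) together with a good/bad \emph{vertex} dichotomy (good meaning $d(u)\le 24\,c(u)/\epsilon$), rather than the good/bad \emph{edge} argument you import from Theorem~\ref{th:p4f}.

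Your route may be completable, but neither of the fixes you propose is routine. Negative association of the $X_j$ under sequential capping is not obvious, since whether a given triple survives depends on the full history of retained sets at that vertex; and ``confining the randomness to the designated edge $\{a_j,b_j\}$'' is underspecified---it is unclear how to sample a triple of $a_j$'s neighbours using only per-edge randomness while retaining the $\Theta(1/d(a_j)^2)$ detection probability. The permutation device is the idea you are missing: it delivers bandwidth control, the correct detection probability, and clean independence all at once.
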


\begin{proof}
Again, we show the theorem by exhibiting a generic distributed testing protocol for testing $H$-freeness, that applies to any graph $H$ on four vertices with a universal vertex. The core of this algorithm is presented as Algorithm~\ref{al:K4gen} where all calculations on indices are performed modulo $d=d(u)$ at node $u$. This algorithm is presented for a graph $H$ with $k$ nodes. 

\begin{algorithm}[htb]
\caption{Testing $H$-freeness for $H$ with a universal vertex. Instructions for node $u$ of degree $d$. We let $k = |V(H)|$.}
\label{al:K4gen}
send $\id(u)$ to all neighbors; \\
index the $d$ neighbors $v_0,\dots,v_{d-1}$ in increasing order of their IDs; \\
pick a permutation $\pi\in \Sigma_d$ of $\{0,1,\dots, d-1\}$, u.a.r.; \\ \label{step3}
\For {\rm each $i\in\{0,\dots,d-1\}$ \label{step4}}
{Send $(\id(v_{\pi(i)}), \id(v_{\pi(i+1)}), \dots , \id(v_{\pi(i+k-2)}))$ to $v_{\pi(i)}, v_{\pi(i+1)}, \dots, v_{\pi(i+k-2)}$;  \label{step5} \\
}
 	\For {\rm each $i\in\{0,\dots,d-1\}$\label{step6}}
		{
		\For {\rm each of the $k-1$ tuples $(\id(w^{(1)}),\dots,\id(w^{(k-1)}))$ received from $v_i$} 
			{
			 Send $(b^{(1)},\dots,b^{(k-1)})$ to $v_i$ where  $b^{(j)}=1$ iff $u=w^{(j)}$ or $\{u,w^{(j)}\}\in E$;  \label{step8}\\
			 }
		}
If $\exists i\in \{0,\dots,d-1\}$ s.t. $H\subseteq G[u,v_{\pi(i)},\dots, v_{\pi(i+k-2)}]$ then  reject else  accept. 
\end{algorithm}

At Step~\ref{step3}, node $u$ picks a permutation $\pi$ u.a.r., in order to compose the $d(u)$ sets  $S_1,  \dots, S_{d(u)}$, which are sent in  parallel at Steps~\ref{step4}-\ref{step5}. At Step~\ref{step8}, every node $u$ considers separately each of the $k-1$ tuples of size $k-1$ received from each of its neighbors, checks the presence of edges between $u$ and each of the nodes in that tuple, and sends back the result to the neighbor from which it received the tuple. Finally, the tests $H\subseteq G[u,v_{\pi(i)},v_{\pi(i+1)},\dots, v_{\pi(i+k-2)}]$ performed at the last step is achieved thanks to the  $(k-1)$-tuple of bits received from each of the neighbors $v_{\pi(i)},v_{\pi(i+1)}, \dots, v_{\pi(i+k-2)}$, indicating the presence or absence of all the edges between these nodes. Note that exactly $2k-5$ IDs are actually sent through each edges at Steps~\ref{step4}-\ref{step5}, because of the permutation shifts. Similarly, $2k-5$ bits are sent through each edge at Steps~\ref{step6}-\ref{step8}. Therefore  Algorithm~\ref{al:K4gen} runs in a constant number of rounds in the \Congest\ model.

We now show that, for any given $G$ which is $\epsilon$-far from being $K_4$-free, Algorithm~\ref{al:K4gen} for $H=K_4$ rejects $G$ with constant probability. 

Let $K$ be a copy of $K_4$ containing a node $u$. Note that $d(u) \geq k-1$ (this condition will be implicitly used in the sequel). Observe that, if $u$ samples the vertices of $K \setminus \{u\}$ (Step~\ref{step5} of Algorithm~\ref{al:K4gen}), then $u$ will detect~$K$. Let us first prove that
\begin{equation}\label{eq:prK4}
\Pr[u \text{~detects~} K] =  d(u)/{d(u) \choose {k-1}}.
\end{equation}

Indeed, among the $d(u)!$ permutations $\pi$ of $\Sigma_{d(u)}$, the ones that sample $K \setminus \{u\}$ are those where the vertices of $K \setminus \{u\}$ appear consecutively as $v_{\pi(i)}, \dots, v_{\pi(i+d-1)}$, for some $i, 0\leq i \leq d(u)-1$ (recall that indices are taken modulo $d$ in Algorithm~\ref{al:K4gen}). We say that we \emph{insert}  a sequence $(y_0, \dots y_{k-2})$ into a sequence $(x_0, \dots, x_i, x_{i+1} \dots, x_{d(u)-k})$ starting from position $i \leq d(u) - k + 1$, for creating the sequence
\[
(x_0, \dots, x_{i-1}, y_0, \dots, y_{k-2}, x_i, \dots, x_{d(u)-k}).
\] 
If $i>d(u)-k+1$, the insertion of  $(y_0, \dots y_{k-2})$ into $(x_0, \dots, x_{d(u)-k})$ at position $i$ means the sequence 
\[
(y_{d(u)-i},\dots,y_{k-2},x_0, \dots, x_{d(u)-k},y_0,\dots,y_{d(u)-i-1}).
\]
The permutations that sample $K \setminus \{u\}$ are the permutations obtained from one of the $\left(d(u)-k+1\right)!$ permutations of the vertices of $N(u) \setminus K$ by inserting one of the $(k-1)!$ permutations of $K \setminus \{u\}$ starting from position $i$, $ 0\leq i \leq d(u)-1$, which proves Equation~\ref{eq:prK4}. 

\bigskip

Since $G$ is $\epsilon$-far from being $K_4$-free, it follows from Lemma~\ref{lem:edge-disjoint} that the graph $G$ contains a family of $\frac{\epsilon m}{6}$ edge-disjoint copies of $K_4$. For each vertex $u$, denote by $c(u)$ the number of copies in this family that contain $u$.  For $i=0,\dots,c(u)-1$, let $X_u^{(i)}$ be the random variable indicating whether $u$ detects the $i$th copy of $K_4$  incident to it, and denote by $X_u$ the random variable indicating whether $u$ detects at least one of these copies. We have:
\[
\Pr[X_u=1] = \Pr[ (X_u^{(0)}=1) \vee (X_u^{(1)}=1) \vee \dots \vee (X_u^{(c(u)-1)}=1)].
\]
By the inclusion-exclusion principle and Bonferroni's inequality~\cite{Bonferroni} (Exercise~1.7 in \cite{Mitzenmacher}), we get
\begin{equation}\label{eq:Bonf}
\Pr[X_u=1] \geq  \sum_{i=1}^{c(u)} \Pr[X_u^{(i)}=1 ] \;\; - \sum_{1 \leq i < j \leq c(u)} \Pr[X_u^{(i)}=1  \wedge X_u^{(j)}=1] .
\end{equation}
Now, for every $i$, $\Pr[X_u^{(i)}=1]$ is given by Equation~\ref{eq:prK4}. In particular,
\[
\Pr[X_u^{(i)}=1 ] = \frac{(k-1)!}{(d(u)-1)(d(u)-2)\dots(d(u)-k+2)},
\]
which, for $k=4$, gives
\begin{equation}\label{eq:Pr4}
\Pr[X_u^{(i)}=1 ] \geq \frac{6}{(d(u)-1)(d(u)-2)},
\end{equation}
and
\begin{equation}\label{eq:SumPr4i}
\sum_{i=1}^{c(u)}\Pr[X_u^{(i)}=1 ] \geq \frac{6 c(u)}{(d(u)-1)(d(u)-2)}.
\end{equation}
We now claim that, for any pair $i,j$ such that $0 \leq i < j \leq c(u)$, we have 
\[
\Pr[X_u^{(i)}=1  \wedge X_u^{(j)}=1] \leq \frac{d(u)^2((k-1)!)^2 (d(u)-2(k-1))!}{d(u)!}. 
\]
Indeed, let $K^i$ and $K^j$ be edge-disjoint copies of $K_4$  containing $u$ corresponding to the random variables $X_u^{(i)}$ and $X_u^{(j)}$. Among the  $d(u)!$ permutations of $N(u)$, those who will satisfy $X_u^{(i)} = X_u^{(j)} = 1$ are such that the elements of $K^i \setminus \{u\}$ and of  $K^j \setminus \{u\}$ appear consecutively (in the cyclic order). Such a permutation is obtained from one of the $(d(u)-2(k-1))!$ permutations of $N(u) \setminus (K^i \cup K^j)$, by inserting one of the $(k-1)!$ permutations of $K^i \setminus \{u\}$, and one of the  $(k-1)!$ permutations of $K^j \setminus \{u\}$, each at one of the $d(u)$ possible positions between $0$ and $d(u)-1$. The inequality holds automatically if $d(u) \geq 2(k-1)$, as otherwise there cannot be two edge-disjoint copies of $K_k$ containing $u$. In particular, for $k=4$, we obtain
\[
\Pr[X_u^{(i)}=1  \wedge X_u^{(j)}=1] \leq \frac{36 d(u)^2 (d(u)-6)!}{d(u)!} = \frac{36 d(u)}{(d(u)-1)(d(u)-2)\dots(d(u)-5)}.
\]
Therefore, 
\[
\sum_{1 \leq i < j \leq c(u)} \Pr[X_u^{(i)}=1  \wedge X_u^{(j)}=1] \leq {c(u) \choose 2}\frac{36 d(u)}{(d(u)-1)(d(u)-2)\dots(d(u)-5)}.
\]
For each $u$, we have $c(u) \leq \frac{d(u)}{3}$ because $c(u)$ counts edge-disjoint copies of $K_4$ containing $u$. So 
\[
{c(u) \choose 2} = \frac{c(u)(c(u)-1)}{2} \leq c(u) \cdot \frac{d(u)-3}{6},
\]
and
\begin{equation}\label{eq:SumPr4ij}
\sum_{0 \leq i < j \leq c(u)} \Pr[X_u^{(i)}=1  \wedge X_u^{(j)}=1] \leq c(u) \frac{6 d(u)}{(d(u)-1)(d(u)-2)(d(u)-4)(d(u)-5)}.
\end{equation}
By Eq.~\eqref{eq:Bonf}, and by subtracting Eq.~\eqref{eq:SumPr4i} and~\eqref{eq:SumPr4ij}, we obtain
\[
\Pr[X_u=1] \geq \frac{6c(u)}{(d(u)-1)(d(u)-2)}\left(1 - \frac{d(u)}{(d(u)-4)(d(u)-5)}\right) .
\]
If $d(u) \geq 8$, the the expression in the parenthesis is at least $1/3$. Therefore, if $d(u) \geq 8$ then 
\[
 \Pr[X_u=1] \geq \frac{2c(u)}{d(u)^2}.
\]
For vertices $u$ such that $d(u) <8$, observe that $c(u) \leq 2$. Therefore, thanks to Eq.~\eqref{eq:Pr4}, we deduce that the probability that $u$ detects a $K_4$ is at least $\frac{2c(u)}{d(u)^2}$. Consequently, for any vertex~$u$,
\begin{equation}\label{eq:prXuK4}
\Pr[X_u = 1] \geq \frac{2 c(u)}{d(u) ^2}.
\end{equation}
Let $Y$ denote the random variable counting the number of vertices $u$ such that $u$ detects one of the $\frac{\epsilon m}6$ edge-disjoint copies of $K_4$. Hence $Y = \sum_{u \in V(G)} X_u$ and, by Equation~\ref{eq:prXuK4}, 
\[
\E[Y] \geq  \sum_{u \in V(G)} \frac{2 c(u)}{d(u) ^2}
\]
Let us prove that this expectation is lower bounded by a positive constant. We say that the vertex $u$ is \emph{good} if $d(u) \leq \frac{24c(u)}{\epsilon}$. Otherwise, $u$ is \emph{bad}. The set of good vertices is denoted by $Good(G)$ and the set of bad vertices is denoted by $Bad(G)$. Observe that 
\[
\sum_{u \in Bad(G)} c(u) < \frac{\epsilon m}{12},
\]
because  
\[
2m = \sum_{u \in V(G)} d(u) \geq  \sum_{u \in Bad(G)} d(u) >  \sum_{u \in Bad(G)} \frac{24c(u)}{\epsilon}.
\]
Therefore, among the $\frac{\epsilon m}{6}$ edge-disjoint copies of $K_4$, at least half do not contain bad nodes, i.e., all their nodes are good. Hence, the number of edges between good vertices is at least $\frac{\epsilon m}{2}$.

\medskip

Now, we claim that among good vertices, at least half have degree at most $\frac{8g}{\epsilon}$. By contradiction, let $g = |Good(G)|$ and assume that there is a set $Good'(G) \subseteq Good(G)$ containing at least $\frac{g}{2}$ vertices, each of degree greater than $\frac{8g}{\epsilon}$. Then
\[
m \geq \frac{1}{2} \sum_{u \in {Good'(G)}} d(u) > \frac{1}{2} \cdot \frac{g}{2} \cdot \frac{8g}{\epsilon}= \frac{2g^2}{\epsilon}.
\]
The number of edges between good vertices is at most $g \choose 2$ and at least  $\frac{\epsilon m}{2}$. In particular,
\[
{g \choose 2} \geq \frac{\epsilon m}{2} > \frac{ \epsilon}{2} \cdot \frac{2g^2}{\epsilon} = g^2,
\]
a contradiction.

\medskip

It follows from the above that 
\[
\E[Y] \geq \sum_{u \in Good(G)} 2\frac{c(u)}{d(u)^2} \geq \sum_{u \in Good(G)} 2\frac{c(u)}{d(u)}\frac{1}{d(u)}
\]
and, by summing only over the at least $\frac g2$ good vertices of degree at most $\frac{8g}{\epsilon}$, we conclude that
\[
\E[Y] \geq 2 \cdot \frac {g}{2} \cdot  \frac{\epsilon}{24} \cdot \frac{\epsilon}{8g} = \frac{\epsilon^2}{192}.
\]
Recall that $Y$ is the sum of independent boolean variables $X_u$. Let $N$ be the random variable counting the
number of vertices detecting a $K_4$ in $T$ iterations of Algorithm~\ref{al:K4gen}, so $N = T\cdot Y$. For $T = 8 \ln 3 \cdot \frac{192}{\epsilon^2}$, we have that
\[
E[N] \geq  8 \ln 3.
\]
Chernoff bound applies to variable $N$ and, as in the proof of Theorem~\ref{th:p4f}, we conclude that $\Pr[N \leq 4 \ln 3] \leq \frac13$.
Altogether, a copy of $K_4$ is detected with probability at least $\frac{2}{3}$, which completes the proof. 
\end{proof}

\section{Limits  of the BFS Approach}

As it happened with the DFS-based approach, the BFS-based approach fails to generalize to large graphs $H$. Actually, it already fails for $K_5$.  

\begin{theorem}\label{th:neg1}
Let $k\geq 5$. There exists a graph $G$ that is $\epsilon$-far from being $K_k$-free in which any constant number of repetitions of Algorithm~\ref{al:K4gen} fails to detect any copy of $K_k$, with probability at least $1-o(1)$. 
\end{theorem}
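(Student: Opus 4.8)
The plan is to reuse the Behrend-type graph $BK(k,p)$ constructed in the proof of Theorem~\ref{th:neg}, and to show that the BFS tester (Algorithm~\ref{al:K4gen}) behaves on it just as poorly as the DFS tester did. Recall the relevant properties of $G=BK(k,p)$ for $p$ a large prime: it has $n=kp$ vertices and $m=\binom k2 pp'$ edges, it is $(k-1)p'$-regular, it is $\epsilon$-far from being $K_k$-free for $\epsilon=\tfrac{2}{k(k-1)}$, and, crucially, it contains \emph{exactly} $pp'$ copies of $K_k$, namely the cliques $K_{i,x}$, where $p'\ge\ppr=p^{1-o(1)}$ by Lemma~\ref{le:largeX}. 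Write $d=(k-1)p'$ for the common degree.

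First I would bound the probability that a single fixed vertex $u$ detects a single fixed copy $K$ of $K_k$ with $u\in K$, during one execution of Algorithm~\ref{al:K4gen}. This is exactly the quantity computed in Eq.~\eqref{eq:prK4}, whose derivation is valid for every $k$ (not only $k=4$): it equals $d/\binom{d}{k-1}=(k-1)!\big/\big((d-1)(d-2)\cdots(d-k+2)\big)$, which, since $d=(k-1)p'$ and $k$ is constant, is at most $C_k/(p')^{k-2}$ for a suitable constant $C_k$ depending only on $k$ (each of the $k-2$ factors in the denominator exceeds $\tfrac12(k-1)p'$ once $p$ is large). Since a copy $K$ of $K_k$ can only be detected by a node belonging to $K$ — a node $u$ detects $K_k$ only inside the $k$-element set $\{u\}\cup S_i$, which therefore must equal $K$ itself — a union bound over the $k$ vertices of $K$ gives
\[
\Pr[\,K\text{ is detected in one execution of Algorithm~\ref{al:K4gen}}\,]\;\le\;\frac{kC_k}{(p')^{k-2}}.
\]

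Next I would union-bound over all $pp'$ copies of $K_k$ in $G$ and over the $T$ independent repetitions, obtaining
\[
\Pr[\,\text{some copy of }K_k\text{ is detected in }T\text{ repetitions}\,]\;\le\;T\cdot pp'\cdot\frac{kC_k}{(p')^{k-2}}\;=\;\frac{kC_k\,T\,p}{(p')^{k-3}}.
\]
Here is where the hypothesis $k\ge 5$ enters: then $k-3\ge 2$, so $(p')^{k-3}\ge(p')^2=p^{2-o(1)}$, which grows faster than $p$; hence for every constant $T$ the right-hand side tends to $0$ as $p\to\infty$. Consequently, with probability $1-o(1)$ no vertex detects any copy of $K_k$, so all vertices accept and Algorithm~\ref{al:K4gen} fails on the $\epsilon$-far instance $G$, as claimed.

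The argument is essentially routine once Theorem~\ref{th:neg} and Eq.~\eqref{eq:prK4} are available; the only points requiring care are that the union bound over copies of $K_k$ is legitimate because we merely need an upper bound on the \emph{expected} number of detected copies (no independence among copies is used), and that the exponent bookkeeping is tight enough to separate $k\ge 5$ (where the bound vanishes) from $k=4$ (where it does not, consistently with the positive result of Theorem~\ref{th:H4univ}). I would expect the main obstacle, if any, to be simply verifying that the degree $d=(k-1)p'$ is large enough for the elementary estimate $(d-1)\cdots(d-k+2)\ge\tfrac1{2^{k-2}}(k-1)^{k-2}(p')^{k-2}$ to hold, which is immediate for $p$ large since $k$ is constant.
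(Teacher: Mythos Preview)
Your proposal is correct and follows essentially the same approach as the paper: reuse $BK(k,p)$, invoke Eq.~\eqref{eq:prK4} to bound the per-vertex detection probability by a constant over $d^{k-2}$ (equivalently $(p')^{k-2}$), and union-bound over the $k$ vertices per clique, the $pp'$ cliques, and the $T$ repetitions to get a quantity that vanishes for $k\ge 5$. Your added remarks on why only a union bound (not independence) is needed and on the exponent bookkeeping separating $k\ge 5$ from $k=4$ are accurate and helpful.
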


\begin{proof}
The family of graphs $BK(k,p)$ constructed in the proof of Theorem~\ref{th:neg}  for defeating Algorithm~\ref{al:PkGen} can also be used to defeat Algorithm~\ref{al:K4gen}. Recall that those graphs have  $n = kp$ vertices,   $m = \binom{k}{2}pp'$ edges, and every vertex is of degree $d= (k-1)p'$, for $p' = p^{1-o(1)}$. Moreover, they have exactly $pp'$ copies of $K_k$, which are pairwise edge-disjoint. $BK(k,p)$ is $\epsilon$-far from being $K_k$-free with $\epsilon = 1/{k \choose 2}$. By Equation~\ref{eq:prK4}, for each copy $K$ of $K_k$, and for every $u \in K$, the probability that $u$ detects $K$ is $d/{d \choose {k-1}} \leq \frac{\alpha}{d^{k-2}}$ for some constant $\alpha > 0$. Therefore,  when running the algorithm $T$ times, it follows from the union bound that the expected number of detected copies of $K_k$ is at most $\frac{\alpha kTpp'}{d^{k-2}}$, which tends to $0$ when $p \to \infty$, for any $k \geq 5$. Consequently, the probability of detects at least one copy of $K_k$ also tends to $0$.
\end{proof}

\section{Conclusion and Further Work}

We have proved that not only the presence of triangles can be tested in $O(1)$ rounds in the \Congest\ model, but also the presence of any  4-node subgraph. On the other hand, we have also proved that the natural approaches used to test such small  patterns fail to detect larger patterns, like the presence of $K_k$ for $k\geq 5$. Interestingly, as mentioned before, the lower bound techniques for the  \Congest\ model are essentially based on reductions to communication complexity problems. Such an approach does not seem to apply easily in the context of distributed testing. The question of whether the presence of large cliques (or cycles) can be tested in $O(1)$ rounds in the \Congest\ model is an intriguing open problem.  

It is worth mentioning that our algorithms generalize to testing the presence of \emph{induced} subgraphs. Indeed, if the input graph $G$ contains at least $\epsilon m$ edge-disjoint \emph{induced} copies of~$H$, for a graph $H$ on four vertices containing a Hamiltonian path, then Algorithm~\ref{al:Pk} detects an induced subgraph $H$ with constant probability (the only difference is that, in the last line of the algorithm, we check for an induced subgraph instead of just a subgraph). Moreover, if the input contains $\epsilon m$ edge-disjoint induced claws (i.e., induced subgraphs $K_{1,3}$), then Algorithm~\ref{al:K4gen} detects one of them with constant probability. Thus, for any connected graph $H$ on four vertices, distinguishing between graphs that do not have $H$ as induced subgraph, and those who have $\epsilon m$ edge-disjoint induced copies of $H$ can be done in $O(1)$ rounds in the \Congest\ model. However, we point out that, unlike in the case of subgraphs, a graph that is $\epsilon$-far from having $H$ as induced subgraph may not have many edge-disjoint induced copies of $H$.

\newpage
\bibliographystyle{plain}

\end{document}